\newcommand{\subfigimg}[3][,]{%
  \setbox1=\hbox{\includegraphics[#1]{#3}}
  \leavevmode\rlap{\usebox1}
  \rlap{\hspace*{10pt}\raisebox{\dimexpr\ht1-1\baselineskip}{\small{#2}}}
  \phantom{\usebox1}
}
\newcommand{\htop}{h_\text{\normalfont top}}
\newcommand{\vleft}{v_\text{\normalfont left}}
\newcommand{\hbot}{h_\text{\normalfont bot}}
\newcommand{\vright}{v_\text{\normalfont right}}
\newcommand{\norm}[1]{\left\lVert#1\right\rVert}
\newcommand{\hMax}{\ensuremath{{h_{\max}}}}
\newcommand{\vMax}{\ensuremath{{v_{\max}}}}
\newtheorem{lemma}{Lemma}
\newtheorem{theorem}{Theorem}
\newtheorem{obs}{Observation}
\newtheorem{definition}{Definition}
\date{}
\title{\vspace*{-2ex}
Multiple Watchman Routes in Staircase Polygons\footnote{A preliminary version of this paper will appear in Proceedings of the 37th Canadian Conference on Computational
Geometry (CCCG 2025). \\A. B., B.~J. N.  and C. S.  are supported by grant 2021-03810 (Illuminate: provably good algorithms for guarding problems) from the Swedish Research Council (Vetenskapsr\r{a}det).}}
\author{
    Anna Brötzner\thanks{Department of Computer Science and Media Technology, Malm\"o University, Sweden, \texttt{\{anna.brotzner,bengt.nilsson.TS\}@mau.se}}
	\and
	Bengt~J. {Nilsson}$^\dagger$
    \and
    Christiane Schmidt\thanks{Department of Science and Technology, Link\"oping University, Sweden, \texttt{christiane.schmidt@liu.se}}
    }
\begin{document}
\thispagestyle{empty}
\maketitle

\begin{abstract}
We consider the watchman route problem for multiple watchmen in staircase polygons, which are rectilinear $x$- and $y$-monotone polygons. For two watchmen, we propose an algorithm to find an optimal solution that takes 
quadratic time, improving on the 
cubic time of a trivial solution. For $m \geq 3$ watchmen, we explain where this approach fails, and present an approximation algorithm for the min-max criterion with only an additive error. 
\end{abstract}

\section{Introduction}
The watchman route problem asks for a shortest route inside a polygon, such that every point in the polygon is visible to some point on the route. It was first introduced by Chin and Ntafos~\cite{Optimum_WR}, who showed that the problem is \NP-hard for polygons with holes, but may be solved efficiently for simple polygons. Given a starting point, an optimal route can be computed in $O(n^3)$ time~\cite{tj2017_WRP}, and finding a solution without a fixed starting point takes a linear factor longer~\cite{tan2001_WRP-no-starting-point}.

The Watchman Route Problem has also been considered for multiple watchmen (a problem introduced in~\cite{cnn1991_mWRP-hist}). For histograms, efficient algorithms have been proposed for minimizing the total route length (min-sum)~\cite{cnn1991_mWRP-hist} and the length of the longest route (min-max)~\cite{ns1992_mWRP-hist-min-max}. 
For two watchmen, Mitchell and Wynters~\cite{mw-wrmg-91} proved \NP-hardness for the min-max objective in simple polygons. Recently, Nilsson and Packer presented a polynomial-time 5.969-approximation algorithm for the same objective in simple polygons~\cite{nilssonpacker2024}. 

In this paper, we consider a quite restricted class of polygons, staircase polygons, that for two watchmen allows us to assign the responsibility for guarding any edge solely to one of the two watchmen (and for two watchmen, seeing the complete polygon boundary is sufficient to see the whole polygon, as shown in~\cite{nilssonpacker2024}). Additionally, we show that the two routes can be separated by a diagonal between two reflex vertices. This enables a polynomial-time algorithm to compute the optimal two watchman routes (for both the min-max and the min-sum objective). 
Despite staircase polygons being so restricted, some of the observations we make do not hold for three or more watchman routes. 
This indicates a discrepancy in the computational complexity between the watchman route problem for one or two watchmen and for multiple watchmen.
We therefore propose an approximation algorithm for $m$ watchmen for the min-max criterion with only additive error that depends on the polygon.

\section{Notation and Preliminaries}
A polygon is called \emph{rectilinear} if all its edges are parallel to the $x$- or the $y$-axis of a given coordinate system, and \emph{$x$-monotone} (\emph{$y$-monotone}) if every line that is orthogonal to the $x$-axis ($y$-axis) intersects the polygon in exactly one connected interval. 
A \emph{staircase polygon} is a rectilinear polygon that consists of two $x$-$y$-monotone chains that intersect only at their endpoints. 
The vertices along a chain then alternatingly have angles of 90 and 270 degrees; we refer to the former ones as \emph{convex vertices} and to the latter ones as \emph{reflex vertices}. 
We call the polygonal chain of boundary edges that lie above and below the interior the \emph{ceiling} and the \emph{floor} of $P$, respectively. 
We consider the watchman route problem for multiple watchmen in staircase polygons, where every route is a closed route. 

\paragraph{Multiple Watchman Route Problem (\texorpdfstring{$\bm{m}$}{m}-WRP).}
Given a polygon $P$, and a number of watchmen $m$, find a shortest set of $m$ closed routes, with respect to the min-sum or min-max criterion, such that every point in $P$ is seen from at least one of the routes. \\

We denote the length of a route $w$ by $\lVert w \rVert$. In the following, if we refer to a watchman $w$ we mean that $w$ is the route that the watchman walks. 
We refer to a solution of the $m$-WRP as a set of $m$ watchman routes in $P$. 
In the following, we consider the $m$-WRP for the min-sum and the min-max criterion. Any statement on optimal watchman routes holds for either objective, unless stated otherwise. 

Let $P$ be a staircase polygon that is not guardable with two point guards. 
As $P$ is $x$- and $y$-monotone, we make the following
claim.

\begin{obs} \label{obs:interval}
    A watchman walking along a route $w$ with leftmost coordinate $x_{\min}$ and rightmost coordinate $x_{\max}$ sees all points $p \in P$ with $x(p) \in [x_{\min}, x_{\max}]$. 
\end{obs}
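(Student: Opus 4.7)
The plan is to take an arbitrary point $p\in P$ with $x(p)\in[x_{\min},x_{\max}]$ and produce a witness point $q$ on the route $w$ such that the line segment $\overline{pq}$ lies entirely in $P$, which is exactly what it means for $q$ (and hence the watchman) to see $p$.

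First, I would use the fact that $w$ is a continuous closed curve whose $x$-coordinate attains both the extreme values $x_{\min}$ and $x_{\max}$. By the intermediate value theorem applied to the $x$-coordinate along $w$, there exists at least one point $q\in w$ with $x(q)=x(p)$. This step is essentially free and does not rely on any geometric property of $P$ beyond $w$ being a connected curve.

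Next, I would invoke the $x$-monotonicity of $P$: by the definition given in the preliminaries, the vertical line $\ell \colon x=x(p)$ intersects $P$ in exactly one connected (vertical) interval $I$. Both $p$ and $q$ are points of $P$ lying on $\ell$, hence both lie in $I$. Since $I$ is a single vertical segment, the entire segment $\overline{pq}$ is contained in $I\subseteq P$, so $q$ sees $p$. As $p$ was arbitrary, the observation follows.

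There is no substantial obstacle here; the only mild care needed is the boundary case when $p$ or $q$ lies on $\partial P$, but since $P$ is closed and the vertical slice $I$ is a closed segment the inclusion $\overline{pq}\subseteq P$ still holds. The argument uses $x$-monotonicity critically and does not need $y$-monotonicity or rectilinearity, which explains why the statement is phrased as a property of monotone polygons rather than of staircase polygons specifically.
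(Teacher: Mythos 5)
Your proof is correct and is precisely the argument the paper leaves implicit (the observation is stated without proof, justified only by the remark that $P$ is $x$- and $y$-monotone): the intermediate value theorem gives a route point $q$ with $x(q)=x(p)$, and $x$-monotonicity makes the vertical slice through $p$ a single segment containing both $p$ and $q$. Your remark that only $x$-monotonicity (not $y$-monotonicity or rectilinearity) is needed is also accurate and consistent with how the paper later invokes the analogous $y$-version.
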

The analogous statement holds for the lowest coordinate $y_{\min}$ and the uppermost coordinate $y_{\max}$ of watchman route $w$. Watchman $w$ thus sees the contiguous part of the ceiling between $y_{\min}$ and $x_{\max}$, and the contiguous part of the floor between $x_{\min}$ and $y_{\max}$, see Figure~\ref{fig:contiguous-boundary-seen}. 
\begin{figure}
    \centering
    \includegraphics[width=0.45\linewidth]{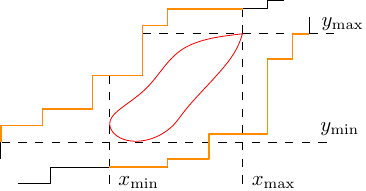}
    \caption{The orange parts of the polygon boundary are seen from the watchman route (red). }
    \label{fig:contiguous-boundary-seen}
\end{figure}

A {\em cut\/} is a directed line segment in $P$ with both end points on boundary of $P$ and where each interior point is an interior point of $P$. 
A cut always separates $P$ into exactly two sub-polygons of nonzero area.
If a cut is represented by the segment $[p,q]$ we say that the cut is directed from $p$ to $q$. For a cut $c$ in $P$, we define the {\em left polygon}, $L(c)$, to be the set of points in $P$ locally to the left of $c$ according to $c$'s direction.

Assume a counterclockwise walk of the boundary of $P$. Such a walk imposes a direction on each of the edges of $P$ in the direction of the walk. Consider a reflex vertex of $P$. The two edges incident to the vertex can each be extended inside $P$ until the extensions reach a boundary point. These extended segments form cuts given the same direction as the edge they are collinear to. We call these cuts {\em extensions}. 

Given a set of cuts, we say that a cut $c$ \emph{dominates} a cut $c'$ if the subpolygon $L(c)$ is a subset of $L(c')$. The non-dominated cuts are called \emph{essential cuts}. 
A cut $c$ is visited by a watchman if its route has a point on $c$ or in $L(c)$. 
The essential cuts are exactly those cuts that need to be visited if the polygon is to be seen by a single watchman. 
Clearly, visiting all essential cuts is also a necessary condition for a set of $m$ watchman~routes. 

A staircase polygon has at most four essential cuts: the leftmost vertical extension of the floor $\vleft$, the lowest horizontal extension of the ceiling $\hbot$, the rightmost vertical extension of the ceiling $\vright$, and the topmost horizontal extension of the floor $\htop$. 
Note that not necessarily all of these four extensions are essential cuts, but at least one of $\vleft$ and $\hbot$, and one of $\vright$ and $\htop$ is essential (if only one cut of a pair is essential, it dominates the other cut of the pair, and visiting the essential cut of such a pair guarantees that also the other cut of the pair is visited). 
For the sake of simplicity, we 
still
refer to all four of them as essential~cuts. 

For one watchman, an optimal solution is given by the shortest route that visits all essential extensions. An example is shown in Figure~\ref{fig:opt-solutions}(a). 
Chin and Ntafos~\cite{Optimum_WR}, prove that such a route can be computed in linear~time. 

\begin{theorem}\label{thm:linear-opt}
    (Theorem 2, Chin, Ntafos~\cite{Optimum_WR})
    A shortest watchman route in simple rectilinear polygons can be found in $O(n)$ time.
\end{theorem}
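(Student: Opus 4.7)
The plan is to reduce the problem to computing a shortest closed route that visits a list of axis-parallel segments, and then exploit the rectilinear structure of $P$ to carry out this computation in linear time. First, I would traverse the boundary of $P$ counterclockwise and, at each reflex vertex, extend the two incident edges into $P$ until they hit the opposite boundary; by the discussion preceding the theorem these extensions are the candidate essential cuts. Because $P$ is rectilinear, each extension is horizontal or vertical and its endpoint on the opposite side can be maintained with a simple stack/sweep while walking the boundary, so the cyclically ordered list of essential cuts $c_1,\dots,c_k$ along $\partial P$ is produced in $O(n)$ time. Dominated cuts are discarded on the fly, since a cut is dominated exactly when a later extension crosses it before it meets the boundary.

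Second, once the essential cuts are known, an optimal watchman route is a shortest closed curve inside $P$ that touches each $c_i$. The standard way to compute this is the \emph{rubber-band} or unfolding technique: one reflects $P$ successively across $c_1,c_2,\dots,c_k,c_1$, obtaining a chain of reflected copies of $P$, and observes that a shortest tour in $P$ visiting the $c_i$ corresponds to a shortest geodesic in this chain between a basepoint and its final reflected image. Because all cuts are axis-parallel, the reflections act as coordinate negations, so the chain of copies admits an implicit representation of size $O(n)$, and geodesic shortest paths in a simple rectilinear polygon can be computed in linear time via triangulation-and-funnel machinery. The choice of basepoint on $c_1$ can be handled by a final ``sliding'' adjustment which, in the rectilinear setting, reduces to a one-dimensional optimization per coordinate and also runs in linear total time.

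The main obstacle is to argue the $O(n)$ bound rigorously in the presence of the combinatorial choices made along the tour: each essential cut may be touched by the optimum either at an interior point or at one of its endpoints, and the combinatorial type of the route (which cuts are touched at endpoints, in which order, and from which side) is not immediate. The key is that essential cuts in a rectilinear polygon inherit a cyclic, non-crossing order from $\partial P$, so the order in which any optimal route visits them is forced up to rotation. Combined with axis-alignment, this forces the sequence of ``active'' cuts in the funnel to change a total of $O(n)$ times during the sweep, which is precisely the amortization that lets the adjustments collapse into a single linear pass rather than requiring an independent shortest-path computation per cut, as exploited by Chin and Ntafos~\cite{Optimum_WR} to obtain the linear bound.
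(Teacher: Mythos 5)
The paper does not actually prove this statement: it is imported verbatim as Theorem~2 of Chin and Ntafos~\cite{Optimum_WR} and used as a black box (here, as the subroutine for computing optimal watchman routes in the subpolygons cut off by a candidate diagonal). So there is no in-paper proof to compare against; what you have written is an attempted reconstruction of the cited argument, and it should be judged as such.

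As a standalone proof your sketch has a genuine gap exactly where the content of the theorem lies. Identifying the essential cuts in $O(n)$ time and reducing the problem to a shortest closed route touching the cuts in their boundary order is the easy part; the theorem is the claim that the unfolding step runs in linear time. You assert that the chain of reflected copies ``admits an implicit representation of size $O(n)$'' because reflections across axis-parallel cuts are coordinate negations, but that is not an argument: with $k$ essential cuts the naive union of reflected copies has complexity $\Theta(nk)$, and the reduction to $O(n)$ requires showing that only the portion of the polygon between two consecutive essential cuts is relevant inside the corresponding copy, and that these portions are pairwise disjoint pieces of $P$ whose total complexity telescopes to $O(n)$. You neither state nor prove this. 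Likewise, the handling of the floating basepoint (``a one-dimensional optimization per coordinate'') and the amortization over combinatorial types of the route are not established; your final paragraph explicitly defers the key counting argument to ``as exploited by Chin and Ntafos,'' which makes the proof circular --- you are citing the theorem you are trying to prove at precisely the step that needs work. If the intent is to treat the result as a citation, as the paper does, that is fine; if the intent is to prove it, the $O(n)$ bound on the unfolded structure and on the funnel/sliding adjustments must be argued, not invoked.
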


\begin{figure*}
\centering\hfill
\subfigimg[width=0.33\linewidth, page=1]{(a)}{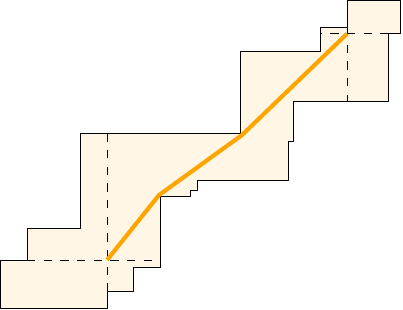}\hfill
\subfigimg[width=0.33\linewidth, page=2]{(b)}{2-watchmen-updated.pdf}\hfill
\subfigimg[width=0.33\linewidth, page=3]{(c)}{2-watchmen-updated.pdf}
\hfill\mbox{}
\caption{Optimal solutions for (a) one watchman, (b) two watchmen, (c) three watchmen.}
\label{fig:opt-solutions}
\end{figure*}

For multiple watchman routes, the watchmen share the responsibility of seeing $P$. Thus, we aim to find a ``good'' distribution of responsibilities among the watchmen. 
For two watchmen, we prove that the polygon may be split into two subpolygons such that an optimal solution to the 2-WRP corresponds to an optimal solution to the WRP in each subpolygon.

\section{An Algorithm for Optimal Two Watchmen}
\label{sec:2-wm}
In this section, we investigate the 2-WRP. 
Let us first state some properties of two optimal watchman routes in staircase polygons. 

\begin{lemma}
\label{lemma:properties}
    There exists an optimal solution $(w_1^*, w_2^*)$ to the $2$-WRP in a staircase polygon~$P$ that satisfies the following properties: 
    \begin{enumerate}
        \item $w_1^*$ and $w_2^*$ do not have any common $x$- and $y$-coordinate. 
        \item $w_1^*$ visits the essential cuts $\hbot, \vleft$, and $w_2^*$ visits the essential cuts $\htop, \vright$.
        \item There exists a pair of reflex vertices $(r, r')$ with $r$ on the floor and $r'$ on the ceiling, such that $\overline{r r'}$ separates $w_1^*$ and $w_2^*$, see Figure~\ref{fig:opt-solutions}(b).
    \end{enumerate}
\end{lemma}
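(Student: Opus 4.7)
My plan is to begin with an arbitrary optimal pair $(w_1, w_2)$ and modify it in stages, none of which worsens the objective, to establish each of the three properties.

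For Property 2, I would first argue that in any nontrivial 2-WRP optimum each watchman visits a proper, nonempty subset of the four essential cuts: if one watchman visited all four, it would see $P$ alone (by the essential-cut characterization in the excerpt), making the second watchman redundant. I would then reroute the two watchmen via an exchange argument so that $w_1$ visits exactly the bottom-left pair $\{\vleft, \hbot\}$ and $w_2$ exactly the top-right pair $\{\vright, \htop\}$. The key tool is a Monge-type inequality for shortest paths in a staircase polygon: pairings of cuts that ``cross'' diagonally yield longer total route length than ``non-crossing'' pairings, due to the monotonicity of both chains. Applying this uncrossing to the current cut assignment and rerouting each watchman through its pair using Theorem~\ref{thm:linear-opt} in the appropriate sub-polygon delivers Property 2.

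For Property 1, assume Property 2. Then $w_1$'s bounding box is anchored at $x_{\min} \le x_{\vleft}$ and $y_{\min} \le y_{\hbot}$, while $w_2$'s is at $x_{\max} \ge x_{\vright}$ and $y_{\max} \ge y_{\htop}$, so the two routes live in opposite corners. If they share an $x$-coordinate $x^*$, then by Observation~\ref{obs:interval} the vertical strip at $x = x^*$ is doubly covered; I would trim whichever route has its assigned cuts entirely on one side of $x^*$ (pushing its $x_{\max}$ or $x_{\min}$ strictly past $x^*$) via a rerouting in $P$, shortening the total length while maintaining visibility. A symmetric argument handles $y$-overlap.

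For Property 3, Properties 1 and 2 place $w_1$ and $w_2$ in disjoint bottom-left and top-right regions of $P$, leaving a ``gap'' strictly between them in both coordinates. The staircase's $x$- and $y$-monotonicity forces at least one reflex vertex $r$ on the floor and one reflex vertex $r'$ on the ceiling to lie in this gap; the segment $\overline{r r'}$ then lies inside $P$ and by construction separates $w_1$ from $w_2$.

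The main obstacle is the Monge-type exchange argument in Property 2: turning the intuition that non-crossing cut pairings dominate crossing ones into a rigorous inequality on shortest-route lengths requires careful geometric reasoning specific to shortest paths in rectilinear $x$-$y$-monotone polygons, and a case analysis to ensure the exchange is valid for both the min-sum and the min-max objective.
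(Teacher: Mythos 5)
Your outline matches the paper's high-level strategy (classify how the four essential cuts are split between the two watchmen, eliminate the bad assignments, then extract a separating reflex--reflex diagonal), but the step you defer --- the ``Monge-type'' uncrossing that rules out the assignment where $w_1$ takes $\{\vleft,\htop\}$ and $w_2$ takes $\{\hbot,\vright\}$ --- is precisely the technical heart of the lemma, and you acknowledge you have not carried it out. It does not follow from a generic quadrangle inequality: the objects being compared are \emph{closed} routes required to touch cuts (not shortest paths between fixed terminals), and under the min-max objective one cannot argue by comparing total lengths at all. The paper instead performs explicit surgery in this case: if the two routes have no $x$- or $y$-overlap it translates $w_1$ down along $\vleft$ to the corner point $q_1=\vleft\cap\hbot$ and $w_2$ up along $\vright$ to $q_2=\vright\cap\htop$, and then must invoke Lemma~3.1 of Nilsson--Packer (boundary coverage implies interior coverage for two watchmen) to certify that the middle subpolygon stays guarded; if there is overlap, it handles min-max by transferring a length budget $\norm{s_2}$ from $w_1$ onto $w_2$ and collapsing $w_1$ to a short vertical segment at $q_1$, and min-sum by appending the segment $s_1$ to $w_1$ and collapsing $w_2$ to a point, reducing to the already-settled three-cuts-vs-one case. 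None of this is routine, and your sketch gives no substitute for it, so Property~2 is not established.

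Two smaller gaps: in your Case~1 analysis, ``$w_2$ is redundant'' is not by itself a contradiction with optimality (a redundant watchman can have length $0$, which costs nothing under min-sum); the paper derives a strict improvement by cutting along $\htop$, observing the piece above it is star-shaped so a point route suffices there, and replacing $w_1$ by a strictly shorter optimal route in the piece below. For Property~3, the mere existence of a coordinate gap between the two routes does not by itself produce a floor-reflex/ceiling-reflex pair whose connecting segment lies in $P$ and separates the routes; the paper needs an extra shortening argument showing that some strip between two consecutive vertical extensions of reflex vertices is entered by at most one route, and only then takes the diagonal of that strip.
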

\begin{proof}
    First, we prove Properties~1 and~2. 
    Let $(w_1,w_2)$ be an optimal solution to the 2-WRP in $P$. Since $P$ is seen, each of the four essential cuts is visited by some watchman. 
    This can be done in four combinatorially different ways. 
    We distinguish these four cases and prove that in an optimal solution only Case~4 given by Property~2 occurs. 
    If one essential cut is dominated by another one, then both of them are visited by the same watchman (that is, not all four cases are possible). 
    
    \textbf{Case~1:} $w_1$ visits all 4 extensions (Figure~\ref{fig:4-cases}(a)). 
    Then, $w_1$ is a watchman route in $P$, and $w_2$ is redundant. 
    Therefore, assume that $\norm{w_2} \leq \norm{w_1}$, and assume that $\htop$ is non-dominated. 
    Cut $P$ into two subpolygons along $\htop$, and denote the subpolygon below $\htop$ by $P_1$, the one above $\htop$ by $P_2$. Then, $P_1$ is seen by $w_1$. $P_2$ is star-shaped and may therefore be guarded by a single watchman $w_2'$ with route length~0. Hence, replacing $w_1$ with an optimal watchman route $w_1'$ in $P_1$ yields a solution $(w_1', w_2')$ for $P$ that is shorter than $(w_1, w_2)$, both for the min-sum and for the min-max criterion: $w_1'$ will never visit $\htop$, but the essential cuts of $P_1$, and is thus strictly shorter than $w_1$; $0 = \norm{w_2'} \leq \norm{w_2}$, with equality in case $w_2$ was point-shaped. 
    This contradicts the optimality of $(w_1, w_2)$.

    \textbf{Case~2:} $w_1$ visits three essential cuts, $w_2$ visits the fourth one (Figure~\ref{fig:4-cases}(b)). 
    Assume w.l.o.g. that $w_1$ visits $\hbot, \vleft$, and $\htop$. We shorten $w_1$ and $w_2$ as in Case~1. 
    
    \textbf{Case 3:} $w_1$ visits $\vleft$ and $\htop$, and $w_2$ visits $\hbot$ and $\vright$ (Figure~\ref{fig:4-cases}(c)). 
    If the two routes have neither $x$- nor $y$-overlap, then
    observe that $w_2$ lies to the right of the rightmost point of $w_1$. 
    We translate $w_1$ such that its lowest intersection with $\vleft$ lies in the point of intersection of $\hbot$ with $\vleft$. Analogously, we move $w_2$ such that its uppermost intersection with $\vright$ lies in the point of intersection of $\htop$ with $\vright$. 
    By construction, the subpolygon below $\hbot$ (above $\htop$), and the subpolygon left of $\vleft$ (right of $\vright$), are star-shaped and $w_1$ ($w_2$) visits its kernel. 
    Let $P'$ be the subpolygon between the essential cuts. Watchman $w_1$ ($w_2$) sees the ceiling (floor) of $P'$, and moving it vertically does not affect this except possibly losing sight of an interval on $\htop$ ($\hbot$), which will be seen by the translated $w_2$ ($w_1$). Hence, by Lemma 3.1 in~\cite{nilssonpacker2024} (which states that a simple polygon is seen by two watchmen if its boundary is seen), $P'$ is seen as well. 

    Assume now w.l.o.g.\ that there is $y$-overlap between the two routes. 
    Let $q_1$ be the point of intersection of $\vleft$ with $\hbot$, and let $q_2$ be the point of intersection of $\vright$ with $\htop$. 
    We distinguish between the min-sum and the min-max criterion. 
    
    For the min-max case, 
    consider the vertical segment $s_1$ between $q_1$ and the intersection of $\vleft$ with $w_1$, and the vertical segment $s_2$ between $q_2$ and the intersection of $\vright$ with $w_2$, and assume w.l.o.g.\ that $\norm{s_1} \geq \norm{s_2}$. Consider the route $w_2 \cup s_2$, and shorten it by removing the connected part of length $\norm{s_2}$ that starts at $\hbot$ to obtain a new route $w_2'$ of the same length as $w_2$. Then we substitute $w_1$ by a vertical segment $w_1'$ of length $\norm{s_2}$ with lowest point $q_1$. The new route $w_1'$ is shorter than $w_2'$. Moreover, $P$ is seen from $(w_1', w_2')$ because the routes touch $q_1$ and $q_2$, thereby covering $P \setminus P'$, and $P'$ is covered since $w_1'$ and $w_2'$ have $y$-overlap. 

    For the min-sum case, observe that $\norm{s_1} \leq \norm{w_2}$. We construct a new route $w_1' = w_1 \cup s_1$, and replace $w_2$ by a route $w_2'$ of length 0 at its endpoint on $\vright$. Then, the sum of the route lengths does not increase, and the routes satisfy the conditions of Case 2.
    
    \textbf{Case 4:} $w_1$ visits $\hbot$ and $\vleft$, and $w_2$ visits $\htop$ and $\vright$ (Figure~\ref{fig:4-cases}(d)). 
    Assume w.l.o.g.\ that $w_1$ and $w_2$ have some $x$-overlap and let $\ell$ be the leftmost vertical line that intersects both routes. 
    Then, $\ell$ cuts $P$ into subpolygons which both are staircase polygons. Denote the subpolygon to the left of $\ell$ with $P_1$, and the one to the right of $\ell$ with $P_2$. 
    Let $w_1'$ be the watchman route that consists of the part of $w_1$ that lies in $P_1$, together with the straight-line segment between the points of intersection of $w_1$ with $\ell$ (this may also be just a single point). By Observation~\ref{obs:interval}, $P_1$ is seen by $w_1'$. Similarly, $P_2$ is seen by $w_2$. 
    Thus, $(w_1', w_2)$ is a shorter solution for $P$ than $(w_1, w_2)$. 
    Furthermore, these routes may still be improved: If both $P_1$ and $P_2$ are seen, then $P$ is seen. We may thus replace $w_1'$ with an optimal watchman route in $P_1$, and $w_2$ with an optimal watchman route in $P_2$. None of these optimal routes touches $\ell$, and therefore they together yield a solution for $P$ that is shorter than $(w_1, w_2)$ where both routes do not share a common $x$-coordinate. 
    \begin{figure}
    \centering
        \hfill
        \subfigimg[width=0.4\linewidth, page=2]{(a)}{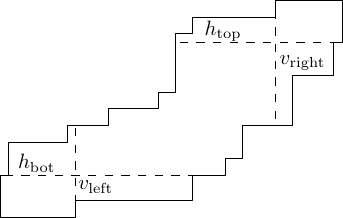}\hfill
        \subfigimg[width=0.4\linewidth, page=5]{(b)}{essential_extensions.pdf}
        \hfill\mbox{}\\ \vspace{0.3cm}\hfill
        \subfigimg[width=0.4\linewidth, page=3]{(c)}{essential_extensions.pdf}\hfill
        \subfigimg[width=0.4\linewidth, page=4]{(d)}{essential_extensions.pdf}
        \hfill\mbox{}
        \caption{Four possibilities for $w_1$ and $w_2$ to visit the up to four essential cuts. }
        \label{fig:4-cases}
    \end{figure}
    
    Let us now show Property 3 to finish the proof. 
    For this, assume that $w_1$ and $w_2$ are two optimal watchman routes that satisfy Properties 1 and 2. Consider all vertical extensions of reflex vertices ordered from left to right. Assume that the interior of every strip between two consecutive such extensions is entered by $w_1$ and $w_2$. Then, we may shorten both $w_1$ and $w_2$ 
    as
    in Case~4 (with the two vertical extensions of the entered strip playing the role of $\ell$), without losing visibility, as both tours will see the complete rectangle spanned by the two extensions, leading to a contradiction. Hence, there exists an empty strip. 
    The diagonal connecting the two opposite reflex vertices of the corresponding extensions does not intersect any of $w_1$ or~$w_2$.
\end{proof}

In the following, we always assume that an optimal solution $(w_1^*, w_2^*)$ obeys Properties~1--3 of Lemma~\ref{lemma:properties}. In particular, $w_1^*$ lies below and to the left of~$w_2^*$.

\begin{lemma} \label{lemma:one-watchman-per-edge}
    In an optimal solution to the $2$-WRP in a staircase polygon, for every polygon edge there exists a watchman that sees the edge completely. 
\end{lemma}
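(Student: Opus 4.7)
The plan is to proceed by contradiction using the separating diagonal from Property~3 of Lemma~\ref{lemma:properties}. Suppose some edge $e$ of $P$ is not fully seen by either $w_1^*$ or $w_2^*$; by the symmetry between ceiling and floor, we may assume $e$ lies on the ceiling. Since $r'$ is a vertex of the ceiling chain, $e$ lies entirely on one side of $r'$, and without loss of generality $e \subset \partial P_1 \cap \partial P$, where $P_1$ is the sub-polygon containing $w_1^*$.

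I would then show that $w_1^*$, viewed as a closed curve in $P_1$, is itself a watchman route for $P_1$, so that $w_1^*$ sees all of $\partial P_1 \cap \partial P$ (including $e$), contradicting the assumption. Concretely, by the characterization underlying Theorem~\ref{thm:linear-opt}, a closed route in a rectilinear simple polygon is a watchman route if and only if it visits every essential cut. By Property~2, $w_1^*$ visits $\vleft$ and $\hbot$, which are the essential cuts of $P$ on the lower-left. The diagonal $\overline{r r'}$ in Property~3 is constructed from two reflex vertices bounding an empty vertical strip, which causes both $r$ and $r'$ to become convex or straight in $P_1$, so no new reflex vertices (and hence no new essential cuts) are introduced by the cut. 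All remaining reflex vertices of $P_1$ are reflex vertices of $P$ on the $P_1$-side; by the $x$- and $y$-monotone structure, their extensions inside $P_1$ are dominated by $\vleft$ and $\hbot$. Thus $w_1^*$ visits every essential cut of $P_1$ and is a watchman route for $P_1$.

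The main obstacle I anticipate is rigorously checking this essential-cut structure, especially when $\overline{r r'}$ is not axis-aligned so that $P_1$ is non-rectilinear. In that case I would either generalize the Chin--Ntafos characterization (essential cuts still capture full-visibility routes in arbitrary simple polygons) or use an exchange argument: if $w_1^*$ misses a point of $e$, then $w_2^*$ must detour into $P_1$ to see it, and replacing $(w_1^*, w_2^*)$ by the pair of optimal watchman routes for $P_1$ and $P_2$ separately, which together cover $P$, would give a strictly shorter solution, contradicting the optimality of $(w_1^*, w_2^*)$. A symmetric argument handles edges on the floor, completing the case analysis.
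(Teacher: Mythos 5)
There is a genuine gap, and it lies at the heart of your argument. You reduce the lemma to the claim that $w_1^*$ is a watchman route for the subpolygon $P_1$ cut off by the separating diagonal $\overline{rr'}$ of Property~3 of Lemma~\ref{lemma:properties}. But that property only asserts that the diagonal separates the two \emph{routes}; it says nothing about which watchman \emph{sees} which side. The diagonal is a chord, not a visibility blocker, so $w_2^*$ may see portions of $P_1$ near the diagonal from within $P_2$ (for instance, the part of the empty strip lying in $P_1$), and then $w_1^*$ need not cover all of $P_1$. Your supporting claim---that every reflex extension of $P_1$ other than $\vleft$ and $\hbot$ is dominated by these two cuts---is false: $P_1$ has its own essential cuts $\vright(P_1)$ and $\htop(P_1)$ at its upper-right end near $r$ and $r'$, and nothing forces $w_1^*$ to visit them. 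The assertion ``each $w_i^*$ sees all of its subpolygon for a suitable diagonal'' is essentially Lemma~\ref{lem:unique-diagonal}, which the paper derives \emph{from} the present lemma, so importing it here is circular. The fallback exchange argument does not rescue this: $w_2^*$ need not ``detour into $P_1$'' to see a point of $e$, and replacing $(w_1^*,w_2^*)$ by optimal watchman routes of $P_1$ and $P_2$ imposes the \emph{stronger} requirement that each route see its entire subpolygon, which can only lengthen the routes; no contradiction with optimality follows.

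What is missing is a local, orientation-based argument rather than a global decomposition. The paper observes that a horizontal floor edge (or vertical ceiling edge) $e$ becomes visible to the lower-left route $w_1^*$ only once $w_1^*$ reaches the extension of $e$, and that any point of $P$ on that extension sees all of $e$; hence $w_1^*$ sees either all of $e$ or none of it. Symmetrically, $w_2^*$ has this all-or-nothing property for horizontal ceiling and vertical floor edges. If the watchman designated by $e$'s orientation sees no point of $e$, then every point of $e$ must be seen by the other watchman, which therefore sees $e$ completely. Your write-up never engages with the horizontal/vertical distinction that drives this dichotomy, and without some such argument the decomposition route you chose cannot close the proof.
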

\begin{proof}
    Let $(w_1^*, w_2^*)$ be an optimal solution, and consider $w_1^*$. As soon as it crosses the extension of a horizontal floor edge $e$, it sees $e$ completely since nothing blocks the visibility between $w_1^*$ and $e$ along $e$'s extension. Similarly, $w_1^*$ sees a vertical edge on the ceiling completely as soon as it crosses the edge's extension. Before crossing the extension, $w_1^*$ does not see the respective edge at all. 
    Hence, for any horizontal floor edge (vertical ceiling edge) $e$, if $w_1^*$ sees any point on $e$, then it sees all points of $e$. 
    Similarly, for any horizontal ceiling edge (vertical floor edge) $e$, if $w_2^*$ sees any point on $e$, then it sees all points of $e$. 
    Assume w.l.o.g.\ that there is a horizontal floor edge $e$ such that no point on $e$ is seen by $w_1^*$. Then, $w_2^*$ sees $e$ completely as otherwise there are points on $e$ that are not seen at all. 
\end{proof}
We can therefore separate the two optimal routes.

\begin{lemma}
\label{lem:unique-diagonal}
    Let $(w_1^*, w_2^*)$ be an optimal solution in a staircase polygon~$P$\!.
    There exists a unique diagonal between a vertex on the floor and a vertex on the ceiling that cuts $P$ into two subpolygons $P_1$ and $P_2$ such that $w_1^*$ sees $P_1$, and $w_2^*$ sees~$P_2$. 
\end{lemma}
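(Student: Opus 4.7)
The plan is to upgrade the separating diagonal already provided by Property~3 of Lemma~\ref{lemma:properties} into a full visibility decomposition. Let $d = \overline{rr'}$ be the diagonal between a floor reflex vertex $r$ and a ceiling reflex vertex $r'$ that separates $w_1^*$ and $w_2^*$, and let $P_1, P_2$ be the resulting subpolygons, labelled so that $w_1^* \subset P_1$ and $w_2^* \subset P_2$.

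To show $w_1^*$ sees all of $P_1$ (the argument for $P_2$ being symmetric), I would first use Lemma~\ref{lemma:one-watchman-per-edge} to classify every edge of $\partial P$ as fully seen by exactly one watchman. Combining this with Property~2 of Lemma~\ref{lemma:properties} and the monotonicity of the staircase chains, each edge of $\partial P$ lying on $\partial P_1$ must be seen by $w_1^*$: such an edge sits on the leftmost/bottommost portion of $\partial P$ reachable through $\vleft$ or $\hbot$, whereas $w_2^*$ lives on the other side of the empty vertical strip identified at the end of the proof of Lemma~\ref{lemma:properties} and is therefore unable to see these edges without the reflex vertices $r$ or $r'$ obstructing the sightline. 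Having covered the boundary of $P_1$, I would then invoke Observation~\ref{obs:interval}: the x- and y-extents of $w_1^*$ stretch from $\vleft$ and $\hbot$ all the way to the empty strip containing $d$, so every point of $P_1$ has one of its axis-parallel coordinates inside the corresponding range of $w_1^*$, and is therefore visible.

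For uniqueness, my plan is a contradiction argument: if two distinct diagonals $d$ and $d'$ both effected the decomposition, the thin slab of $P$ lying between them would be covered both by $w_1^*$ (from one viewpoint) and by $w_2^*$ (from the other), forcing their x- or y-ranges to overlap on some common coordinate and contradicting Property~1 of Lemma~\ref{lemma:properties}.

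The hard part is the first step: because the diagonal is not axis-aligned, ruling out $w_2^*$ ``reaching around'' $d$ to see into $P_1$ takes some care, and relies crucially on the empty-strip property extracted in the proof of Lemma~\ref{lemma:properties}. A clean fallback, if the direct argument proves slippery, is to case-split on the four edge orientations (horizontal/vertical on floor/ceiling) and apply the all-or-nothing visibility behaviour established inside the proof of Lemma~\ref{lemma:one-watchman-per-edge} to each type separately.
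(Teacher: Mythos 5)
There is a genuine gap: you identify the diagonal of this lemma with the reflex--reflex separating diagonal $\overline{rr'}$ from Property~3 of Lemma~\ref{lemma:properties}, but that diagonal does not in general satisfy the conclusion. The empty strip bounded by the vertical extensions of $r$ (a floor reflex vertex) and $r'$ (a ceiling reflex vertex) is a rectangle whose bottom-left corner is $r$ and whose top-right corner is $r'$; cutting along $\overline{rr'}$ puts the lower-right triangle of that rectangle into $P_1$. A point near the corner $(x(r'),y(r))$ lies on the horizontal floor edge emanating rightward from $r$, and $w_1^*$ sees that edge only if it reaches the horizontal chord at height $y(r)$; if $w_1^*$ stays strictly below $y(r)$ (which is perfectly consistent with Properties~1--3), any sightline from $w_1^*$ to that corner passes below $r$ and exits through the vertical floor edge beneath $r$. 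So your key intermediate claim --- that every edge of $\partial P$ lying on $\partial P_1$ is seen by $w_1^*$ --- is false for this choice of diagonal, and so is the subsequent claim that every point of $P_1$ has an $x$- or $y$-coordinate in the corresponding range of $w_1^*$ (the corner above has neither). The symmetric premise, that $w_2^*$ cannot see into $P_1$ past $r$ or $r'$, is also false: $w_2^*$ can cross the rightward horizontal extension of the floor edge through $r$ and thereby see that entire edge, including its portion inside $P_1$.

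The paper takes a different route that avoids this: it uses Lemma~\ref{lemma:one-watchman-per-edge} to show that along each monotone chain the responsibility for edges cannot alternate between the watchmen, so there is a single \emph{breaking point} on the floor and one on the ceiling where responsibility switches (these need not be reflex vertices and need not bound the empty strip), and then proves by contradiction that some floor breaking point and some ceiling breaking point see each other --- the blocking reflex vertex would force the ``wrong'' watchman to see an edge past the supposed breaking point. That mutually visible pair defines the diagonal. Your uniqueness argument is also not sound as stated: a region being covered by both watchmen does not force their $x$- or $y$-ranges to overlap, since a route can see points outside both of its coordinate ranges (e.g., the star-shaped pockets beyond the essential cuts), so no contradiction with Property~1 arises.
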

\begin{proof}
    By Lemma~\ref{lemma:one-watchman-per-edge}, every edge is completely seen by a watchman. For a chain of consecutive edges on the floor or ceiling, there cannot be an alteration in the responsibility of the watchmen: Let $e_i$, $e_{i+1}$, and $e_{i+2}$ be three consecutive edges (on the floor or ceiling). If one watchman sees $e_i$ and $e_{i+2}$ completely, then it also sees $e_{i+1}$. 
    Hence, there exist vertices on the floor and the ceiling such that $w_1^*$ sees all edges that lie below and to the left of them completely, and $w_2^*$ sees all edges that lie above and to the right of them completely. 
    We call such vertices \emph{breaking points} and show that there exist two breaking points, one on the floor and one on the ceiling, that see each other---these define the unique diagonal. 
    Assume that this is not the case. Let $b_f$ be the lowest-leftmost breaking point on the floor, and $b_c$ be the upper-rightmost breaking point on the ceiling. W.l.o.g., assume that all breaking points on the floor lie to the upper-right of the breaking points on the ceiling (in particular, $b_f$ lies to the upper-right of $b_c$). 

    Since $b_f$ and $b_c$ do not see each other, there exist some edges incident to a reflex vertex $r$ that block the visibility. Assume that these edges lie on the ceiling. Then, the horizontal edge incident to $r$ lies above $b_c$ and below $b_f$, and is seen by $w_2^*$ (by definition of $b_c$). Hence, $w_2^*$ sees the vertical floor edge $v$ that is hit by the horizontal extension through $r$ (as described in the proof of Lemma~\ref{lemma:one-watchman-per-edge}), and thereby also the convex vertex on the lower end of $v$, contradicting the choice of~$b_f$ (being the lowest-leftmost breaking point on the~floor). 
\end{proof}

\begin{figure}
\centering\hfill
\subfigimg[width=0.35\linewidth, page=1]{(a)}{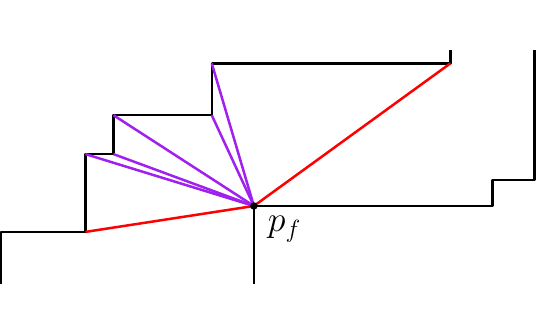}\hfill
\subfigimg[width=0.35\linewidth, page=2]{(b)}{links.pdf}
\hfill\mbox{}
\caption{(a) The candidate diagonals of a reflex vertex $p_f$: two with positive slope (red), and several with negative slope (purple). (b) A diagonal with positive slope (red) that is not a candidate. An optimal route in $P_2$ (gray) needs to visit the same essential cut (dashed) as an optimal watchman route in the subpolygon induced by $\overline{p_cp'_f}$ (purple). }
\label{fig:candidate-diagonals}
\end{figure}

Observe that Lemma~\ref{lemma:one-watchman-per-edge} only holds for two watchmen. For three or more watchmen, some edges may only be seen partially by each watchman in an optimal solution. 
An example is shown in Figure~\ref{fig:opt-solutions}(c). 
The blue watchman is in charge of monitoring a part of a vertical floor edge above the red watchman's visibility region. The yellow watchman does not see this edge at all and would have to walk very far to reach the vertical extension of this edge. 
Therefore, an optimal solution for $m \geq 3$ watchmen may induce a split of the polygon's floor and ceiling into more than $m$ parts each, such that every part is seen by a single watchman. This means that a watchman may be ``in charge of'' more than one contiguous part of the boundary on the floor and ceiling.

We present an algorithm that finds an optimal split, and thus computes an optimal solution for two watchmen in $O(n^2)$ time. 
By 
Lemma~\ref{lem:unique-diagonal}, we consider all diagonals between vertices on the floor and on the ceiling. Any such diagonal splits $P$ into two subpolygons. For each of them, we compute an optimal watchman route using a modified version of the linear-time algorithm 
by Chin and Ntafos~\cite{Optimum_WR}, and then combine the two routes to a solution for the 2-WRP~in~$P$. 

As there are at most quadratically many diagonals to consider, this procedure trivially yields a cubic-time algorithm. However, maintaining a similar structure of the subpolygons by dealing with the diagonals in a certain order allows us to compute many of the watchman routes in amortized constant time. 

To this end, we iterate over the vertices on the floor. For each floor vertex $p_f$, we compute all its diagonals to points on the ceiling, in clockwise order around $p_f$. If $p_f$ is a convex vertex, then all diagonals have a negative slope. If $p_f$ is a reflex vertex, some diagonals have positive slope. 
However, we do not need to consider all diagonals with positive slope, but only those two that are followed or preceded by a 
negative-slope diagonal in clockwise order. We call those, and the diagonals with negative slopes, \emph{candidate diagonals}; see Figure~\ref{fig:candidate-diagonals}(a). 
Every candidate diagonal splits $P$ into two subpolygons, $P_1$ below and $P_2$ above the diagonal.

\begin{lemma} \label{lem:candidatediagonal}
    Any diagonal that is not a candidate diagonal induces a solution that is at least as long as the solution induced by some candidate diagonal. 
\end{lemma}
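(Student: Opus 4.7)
The plan is to show that for every non-candidate diagonal $D=\overline{p_f p_c}$ there is a candidate diagonal $D'$ whose induced 2-watchman solution has length no greater than that of $D$, under both the min-sum and min-max criteria. By the definition of a candidate, any non-candidate diagonal has positive slope and starts at a reflex floor vertex $p_f$, with $p_c$ strictly interior to the angular range of positive-slope diagonals from $p_f$ (that is, $p_c$ is neither the steepest nor the shallowest visible ceiling target in that range). First I would record this structural characterization to narrow the case I need to handle.

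Following the construction suggested by Figure~\ref{fig:candidate-diagonals}(b), I would keep $p_c$ fixed and take $p'_f$ to be an appropriately chosen floor vertex whose $x$-coordinate exceeds that of $p_c$, so that $D' = \overline{p'_f p_c}$ has negative slope from $p'_f$ and is therefore automatically a candidate. The key geometric claim is that the optimal watchman routes in the upper subpolygons $P_2$ (induced by $D$) and $P'_2$ (induced by $D'$) are both forced to visit the same essential cut---the one indicated by the dashed line in the figure---and therefore share the same length: this essential cut lies deep inside both subpolygons, dominates the optimal route length in each, and the shortest closed curve visiting it is identical in $P_2$ and $P'_2$. The modified Chin--Ntafos algorithm together with Theorem~\ref{thm:linear-opt} then gives $\norm{r'_2}=\norm{r_2}$.

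For the lower subpolygons, although $P'_1 \supseteq P_1$, I would argue that the optimal watchman route in $P'_1$ is no longer than in $P_1$: the additional strip between $D$ and $D'$ does not contribute a new essential cut that forces the route to deviate further, because any reflex vertex lying in the strip has its extensions dominated by existing essential cuts of $P_1$ (or else the extension is already visited by $r_1$ via the diagonal boundary). Putting the two inequalities together yields $\norm{r'_1}+\norm{r'_2}\le\norm{r_1}+\norm{r_2}$ and $\max\{\norm{r'_1},\norm{r'_2}\}\le\max\{\norm{r_1},\norm{r_2}\}$, which proves the lemma.

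The main obstacle is making the essential-cut invariance fully rigorous: I must justify that the reflex vertices of $P$ contained in the triangular region swept between $D$ and $D'$ do not create an essential cut of $P'_2$ that is absent from $P_2$ (or vice versa), so that the shortest closed curve visiting all essential cuts is indeed the same in both subpolygons. The min-max case requires extra care, since one has to show that \emph{neither} of the two individual route lengths increases in passing from $D$ to $D'$, rather than merely bounding the total.
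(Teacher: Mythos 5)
Your overall strategy coincides with the paper's: replace the non-candidate positive-slope diagonal $\overline{p_fp_c}$ by a candidate diagonal that keeps the ceiling endpoint $p_c$ and moves the floor endpoint to a vertex $p'_f$ lower down on the floor, then argue that the route in the upper subpolygon is unchanged because both upper subpolygons are forced down to the same essential cut (the horizontal extension through the convex vertex $p'_c$ above $p_c$), while the route in the lower subpolygon does not grow. That is exactly the paper's argument for this lemma.

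There is, however, a concrete error in your handling of the lower subpolygons. Your own construction ($x(p'_f)>x(p_c)$ together with $\overline{p'_fp_c}$ having negative slope) forces $y(p'_f)<y(p_c)<y(p_f)$, so $p'_f$ lies below $p_f$ on the floor. Consequently the region between the two diagonals --- bounded by $D$, $D'$ and the floor chain from $p'_f$ to $p_f$ --- is \emph{removed from} the lower subpolygon and \emph{added to} the upper one: $P'_1\subseteq P_1$ and $P'_2\supseteq P_2$, the opposite of the containment $P'_1\supseteq P_1$ you assert. Your argument that ``the additional strip between $D$ and $D'$ does not contribute a new essential cut'' is therefore aimed at the wrong subpolygon; that strip belongs to $P'_2$, and ruling out that its reflex vertices create an essential cut of $P'_2$ absent from $P_2$ (which you correctly identify as the main obstacle only in your final paragraph) is precisely what is needed to conclude that the upper routes have equal length. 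For the lower subpolygon the required statement is the easier, opposite one: shrinking $P_1$ to $P'_1$ by cutting along a diagonal closer to the bottom-left corner can only move the relevant essential cuts inward, so the optimal route cannot lengthen --- but your stated justification does not establish this. Once the containments are straightened out, your two inequalities ($\norm{w'_2}=\norm{w_2}$ and $\norm{w'_1}\leq\norm{w_1}$) hold per route, so no additional care is needed to distinguish min-max from min-sum.
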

\begin{proof}
    First, note that a diagonal of positive slope is spanned between two reflex vertices. 
    Consider w.l.o.g.\ a non-candidate diagonal $\overline{p_fp_c}$, as seen in Figure~\ref{fig:candidate-diagonals}(b). 
    Then there is a convex vertex $p'_c$ above $p_c$ that does not yield a diagonal of $p_f$ because $y(p_c) < y(p_f)$. The subpolygon $P_2$ above $\overline{p_fp_c}$ has the horizontal line through $p'_c$ as an essential cut. Hence, the watchman route in $P_2$ has points below this cut. 
    There exists a subpolygon induced by a candidate diagonal (incident to $p_c$ and with the other endpoint $p'_f$ below $p_f$) that also has the horizontal line through $p'_c$ as an essential cut. 
    The watchman route in the subpolygon above 
    $\overline{p_cp'_f}$ remains the same, and the watchman route in the subpolygon below is not longer than the one induced by~$\overline{p_fp_c}$.     
\end{proof}

Now we compute a solution for each candidate diagonal in the following manner.

\textbf{Step~1:} Consider a diagonal with negative slope. 
    Cutting along this diagonal creates only convex vertices in each subpolygon, hence all four essential cuts per subpolygon are rectilinear. 
    The watchman routes touch these extensions, but do not cross them~\cite{Optimum_WR}. 
    We compute the optimal solutions for the subpolygons induced by the first diagonal in clockwise order in linear time by Theorem~\ref{thm:linear-opt}. 
    In addition, we compute two shortest-path-tree data structures~\cite{HarTar:treeancestors}. One is rooted at the first reflex vertex on the floor and stores the shortest paths to all other floor vertices, the other one is rooted at the first reflex vertex on the ceiling and stores the shortest paths to all other ceiling vertices. 
    
    Then, for each diagonal in order, we update the solution in the following way. 
    Moving from one diagonal to the next (i.e., moving from one vertex on the ceiling to the next) alters either the essential cut $\vright\!(P_1)$ of $P_1$, or the essential cut $\hbot\!(P_2)$ of $P_2$. During this movement, any reflex vertex on the ceiling touched by the route can only be released once per vertex $p_f$, and they are released from right to left. 
    Similarly, any reflex vertex on the floor can be added as an anchor point only once per vertex $p_f$, and they get added from left to right. 
    Hence, the number of updates per vertex $p_f$ is at most linear. 

    When updating the route $w_1$ in $P_1$, we move from one vertical extension $\vright\!(P_1)$ to the next one $v'_{\text{right}}\!(P_1)$. 
    We use the shortest-path-tree of the floor to check whether vertices on the floor get added to, and the shortest-path-tree of the ceiling to check whether vertices on the ceiling get released from the route. 
    This can be done in amortized constant time~\cite{HarTar:treeancestors}. 

\textbf{Step 2:} If $p_f$ is a reflex vertex, we need to consider also the two candidate diagonals with positive slope. 
    Here, the subpolygons' essential cuts differ from those of a staircase polygon: There is exactly one non-rectilinear essential cut, namely the extension of the diagonal. We may nevertheless compute an optimal solution, using the algorithm by Chin and Ntafos~\cite{Optimum_WR}.
    This algorithm defines a set of essential cuts, along which the polygon is reflected. Computing the shortest path from one of these essential cuts to its copy yields the shortest watchman route in the original polygon. 
    Since there are at most five essential cuts, we can try all combinations of subsegments of these essential cuts and apply the Chin-and-Ntafos reduction which takes linear time in each of these constant number of cases. 

Thus, the computations for each vertex $p_f$ take amortized linear time. 
As we do this for every vertex on the floor, there are linearly many vertices to consider. With this, we get an optimal solution to the 2-watchman route problem in staircase polygons. 

\begin{theorem}
    An optimal solution to the $2$-WRP in staircase polygons can be computed in $O(n^2)$ time. 
\end{theorem}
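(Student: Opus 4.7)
The plan is to combine Lemma~\ref{lem:unique-diagonal} with Lemma~\ref{lem:candidatediagonal} to reduce the 2-WRP to enumerating, over each floor vertex $p_f$, the $O(n)$ candidate diagonals through $p_f$, computing for each an optimal watchman route in each of the two induced subpolygons, and taking the best (min-sum or min-max) combination. Applying Theorem~\ref{thm:linear-opt} from scratch at every diagonal gives a trivial $O(n^3)$ algorithm, so the real content is to amortize the work to linear per floor vertex by exploiting the fact that consecutive candidate diagonals differ by a single reflex event on the ceiling.

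I would first fix $p_f$ and sort its candidate diagonals in clockwise order. For the very first diagonal I would invoke the Chin--Ntafos linear-time algorithm on each of $P_1$ and $P_2$ to obtain a base solution, and in addition build two shortest-path-tree structures of Harel--Tarjan~\cite{HarTar:treeancestors}: one rooted at the leftmost reflex vertex of the floor, storing shortest paths to all other floor vertices, and one rooted at the rightmost reflex vertex of the ceiling, storing shortest paths to all other ceiling vertices. These support the level-ancestor/nearest-common-ancestor queries that are exactly the primitive needed to maintain an optimal rectilinear watchman route under incremental changes of a single essential cut.

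Next, as the diagonal's ceiling endpoint sweeps clockwise, only the essential cuts $\vright(P_1)$ and $\hbot(P_2)$ change, and each reflex vertex can be added to or released from the current optimal route at most once during the scan for $p_f$: reflex vertices on the ceiling are released from right to left in $P_1$, while reflex vertices on the floor are appended from left to right in $P_2$. Each such event is resolved by an amortized constant-time query in the appropriate shortest-path tree, so the whole family of negative-slope candidate diagonals through $p_f$ is processed in total amortized linear time.

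The main obstacle, as I see it, is the handling of the (at most two) candidate diagonals of positive slope when $p_f$ is reflex, since these introduce a non-rectilinear essential cut in each subpolygon and therefore break the invariants used by the sweep. My plan is not to extend the incremental update to this case but to treat it separately: each such subpolygon still has only a constant number of essential cuts, so I would enumerate the $O(1)$ combinatorial ways the optimal route can touch each of them and, for each configuration, apply the Chin--Ntafos reflection reduction to compute a shortest watchman route in linear time. This contributes only $O(n)$ additional work per floor vertex. Summing the amortized linear cost per floor vertex over the $O(n)$ floor vertices, together with the one-time linear-time constructions of the shortest-path-tree data structures, yields the claimed $O(n^2)$ bound.
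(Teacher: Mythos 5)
Your proposal matches the paper's proof essentially step for step: the same reduction via Lemmas~\ref{lem:unique-diagonal} and~\ref{lem:candidatediagonal} to candidate diagonals per floor vertex, the same incremental sweep of the ceiling endpoint maintaining the two subpolygon routes with Harel--Tarjan shortest-path trees in amortized constant time per event, and the same separate constant-case Chin--Ntafos reflection treatment of the at most two positive-slope diagonals at reflex floor vertices. The only deviations are cosmetic (e.g., which subpolygon's route the floor-vertex additions are attributed to), so this is the paper's argument.
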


\section{An Approximation Algorithm for Min-Max Multiple Watchman Routes}
For the general case of $m$ watchmen, we propose an approximation algorithm for the min-max criterion. 
We consider a canonical set of watchman routes that always span the distance between the floor and the ceiling, thus avoiding the situation depicted in Figure~\ref{fig:opt-solutions}(c). 
Using dynamic programming, we can find a set of min-max canonical routes. We formalize this in Definitions~\ref{def:elbow} and~\ref{def:E-route}. 

\begin{definition}\label{def:elbow}
    Let $p_c$ be a convex vertex on the ceiling, and $p_f$ be a convex vertex on the floor. 
    Let $v_c$ be the maximal vertical line segment in $P$ incident to $p_c$, and let $h_f$ be the maximal horizontal line segment in $P$ incident to $p_f$. If $v_c$ and $h_f$ intersect in a point $s$, we define the \emph{left arm} of $p_c$ and $p_f$ to be the segment pair $\overline{p_cs} \cup \overline{p_fs}$. 
    Then, the \emph{left elbow} of $p_c$ and $p_f$ is the closure of the part of the left arm that lies strictly in the interior of $P$\!, and $s$ is its \emph{(left) elbow point}. 

    Symmetrically, we define the \emph{right arm} and the \emph{right elbow} with its \emph{(right) elbow point} of $p_c$ and $p_f$ by considering the horizontal line segment incident to $p_c$ and the vertical line segment incident to $p_f$. 
\end{definition}

Given two elbows $E$ and $E'$, where $E$ is defined by the two vertices $p_c$ and $p_f$ on the ceiling and the floor, and $E'$ is defined by the two vertices $p_c'$ and $p_f'$ on the ceiling and the floor, respectively, where $p_c'$ does not lie to the left of $p_c$ and $p_f'$ does not lie to the left of $p_f$, we say that $E$ \emph{precedes} $E'$ and denote this by $E \prec E'$. 

For our approximation algorithm, we consider watchman routes that consist of a left elbow $E$, a right elbow $E'$ with $E \prec E'$\!, and the two shortest paths that connect the ceiling endpoints of the two elbows, and the floor endpoints of the two elbows, see Figure~\ref{fig:E-route}.
We call such a route an \emph{E-route}, and denote its length by~$\|T(E,E')\|$. 
\begin{figure}
\centering
    \includegraphics[width=0.41\linewidth, page=1]{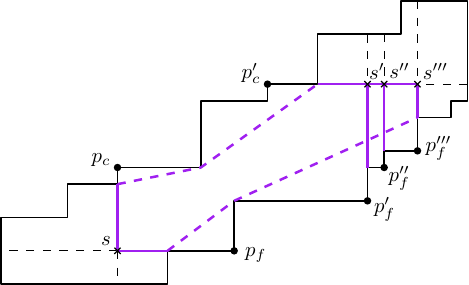}
    \caption{A left elbow defined by $p_c$ and $p_f$, and right elbows defined by $p_c'$ and $p_f'$, $p_f''$, and $p_f'''$ (purple). } 
    \label{fig:E-route}
\end{figure}

\begin{definition}\label{def:E-route}
    A set of watchman routes is called a \emph{canonical solution} if every watchman route is an E-route and every point in $P$ is seen from at least one route. 
\end{definition}

By Observation~\ref{obs:interval}, every point in $P$ that lies between the leftmost and the rightmost point, or the lowest and the uppermost point of an E-route is seen from the route. Thus, we only need to make sure that the points between two E-routes are seen as well. 
Let $E'$ be a right elbow and $E$ be a left elbow, $E' \prec E$. We say that $E'$ and $E$ \emph{partner} each other if every part in $P$ that lies to the right or above $E'$ and to the left or below $E$ is seen from some point on $E'$ or~$E$. 
To find a shortest canonical solution for a polygon $P$\!, we consider all elbows in $P$, sorted by the coordinates of their elbow points, first from left to right and then 
from bottom to top. 
\newcommand{\calL}[1]{\ensuremath{{\cal L}\big(#1\big)}}
\newcommand{\calE}{\ensuremath{{\cal E}}}
\newcommand{\partner}[1]{\ensuremath{L^{\!\rightarrow}\hspace*{-0em}(#1)}}

Let $E_l$ be a left elbow from which every point in $P$ that lies to the left or below $E_l$ is seen, and let $E_r$ be a right elbow from which every point in $P$ that lies to the right or above $E_r$ is seen. We call such an elbow an \emph{outer} elbow. 
Assume $E_l \prec E_r$. 
$E_l$ and $E_r$ define an E-route and, furthermore, every pair of convex vertices on the floor and ceiling that lie between the vertices that define $E_l$ and $E_r$ may define a left and a right elbow. Let \calE\ be the set of elbows between $E_l$ and $E_r$, and let $R(\calE)$ be the set of right elbows in~\calE. 
\partner{\calE,E'} is the set of left elbows in \calE\ partnering the right elbow~$E'$. $A(\calE,E)$ is the set of all elbows in \calE\ preceding the right elbow~$E$.
The following recurrence computes the length of a min-max set of at most $m$~E-routes. 

{
\begin{equation*}
\calL{\calE \!,E_l,E_r,m} =
\nonumber
\hspace*{-2.5ex}
\min_{\stackrel{\mbox{\scriptsize$E_i\in R(\calE)$}}{\mbox{\scriptsize$E_j\in\partner{\calE,E_i}$}}}
\hspace*{-0.5ex}
\left\{
\hspace*{-1.0ex}
\begin{array}{lr}
   \max 
   \Big\{ \big\|T(E_j,E_r)\big\|, 
          &\\\hspace{3.25ex}
          \calL{A(\calE,E_i), E_l, E_i, m-1} 
   \Big\}
& \hspace*{-0.75ex}
\mbox{if $m>1$}
\\
\big\|T(E_l,E_r)\big\|
& \hspace*{-0.75ex}
\mbox{if $m\geq1$}
\end{array}
\right.
\end{equation*}
}
We apply dynamic programming to the recurrence, using all possible outer left elbows $E_l$ and outermost right elbows $E_r$. 
For this, we preprocess the polygon to obtain a data structure that allows us to test whether two elbows are partnering in constant time. 
For each extension of a ceiling edge, we identify the edge on the floor that it hits, 
and similarly for the extensions of floor edges. This takes linear time. 

Let $E'$ be a right elbow of vertices $p'_c$ and $p'_f$ on the ceiling and floor, respectively, let $E$ be a left elbow of vertices $p_c$ and $p_f$ on the ceiling and floor, respectively, and let $E' \prec E$. 
We can determine whether $E'$ and $E$ see all points between them by considering the edges that are hit by the extensions aligned with the elbows. If there is an unseen convex vertex between them, then $E'$ and $E$ are not partnering. Otherwise, the relevant region between them has constant complexity and we can determine visibility in constant time. 

\begin{lemma}\label{lem:e-route-optimality}
   \mbox{Recurrence $\calL{\calE,E_l,E_r,m}$ gives the length} of a min-max canonical solution of up to $m$ E-routes using the outer elbows $E_l$ and $E_r$ in $O(mn^4)$ time. 
\end{lemma}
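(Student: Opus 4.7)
The plan is to prove correctness of the recurrence by induction on $m$ and then bound the running time. The invariant I would establish is that $\calL{\calE, E_l, E_r, m}$ equals the min-max length over all sets of at most $m$ E-routes whose leftmost route has left elbow $E_l$, whose rightmost route has right elbow $E_r$, whose intermediate elbows are drawn from $\calE$, and which jointly cover the part of $P$ bounded by $E_l$ and $E_r$. The base case $m = 1$ is immediate: the only such set is the singleton $\{T(E_l, E_r)\}$ of length $\|T(E_l, E_r)\|$, matching the second branch of the recurrence.

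For the inductive step $m > 1$, I would argue both inequalities. For the upper bound on $\calL$, take an optimal solution with $k \leq m$ E-routes and max length $M$. If $k = 1$, the second branch gives $\|T(E_l, E_r)\| = M$. Otherwise, letting $T(E_j^*, E_r)$ be the rightmost E-route and $E_i^*$ be the right elbow of its predecessor, coverage of the strip between these two routes is equivalent to $E_i^* \in R(\calE)$ with $E_j^* \in \partner{\calE, E_i^*}$. The first $k-1$ routes then form a valid solution for the subproblem parameterized by $(A(\calE, E_i^*), E_l, E_i^*, m-1)$ with max length at most $M$, so by the induction hypothesis $\calL{A(\calE, E_i^*), E_l, E_i^*, m-1} \leq M$; combined with $\|T(E_j^*, E_r)\| \leq M$, the recurrence's max at the choice $(E_i^*, E_j^*)$ is at most $M$. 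For the matching lower bound, any feasible $(E_i, E_j)$ in the recurrence gives a valid solution by appending $T(E_j, E_r)$ to an inductively optimal subproblem solution; partnering of $E_i$ and $E_j$ covers the strip between the two halves, and Observation~\ref{obs:interval} together with the defining properties of $E_l, E_r$ handles the extremes.

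For the running time, the subproblems $\calL{\cdot, E_l, E, m'}$ for a fixed outer $E_l$ are indexed by a right elbow $E$ and a budget $m' \leq m$. Each subproblem evaluates a min-max over pairs $(E_i, E_j)$, using the linear-time preprocessing of extension-to-edge incidences sketched just before the lemma to answer each partnering query in $O(1)$. Summing across outer elbow configurations and applying standard dynamic-programming bookkeeping should yield the claimed $O(mn^4)$ bound.

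The hardest part will be the geometric coverage step: confirming that partnering is both necessary and sufficient for two consecutive E-routes in a canonical solution to jointly cover the strip between their adjacent elbows. Once the only potentially uncovered region is localized to that strip and the partnering definition is invoked, the inductive decomposition and the running-time accounting should both be routine.
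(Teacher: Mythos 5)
Your proposal matches the paper's own (much terser) argument: the paper likewise peels off the rightmost E-route, relies on the partnering condition together with Observation~\ref{obs:interval} and the choice of the outer elbows to guarantee coverage, and counts $O(n^2)$ elbows, hence $O(n^4)$ pairs, with $m$ levels of recursion for the running time. Your explicit induction on $m$ and the necessity-and-sufficiency of partnering that you flag as the hard step are both left implicit in the paper, so your write-up is, if anything, more careful than the original.
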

\begin{proof}
All of $P$ is seen because of Observation~\ref{obs:interval}, the choice of $E_l$ and $E_r$, and the fact that for every right elbow $E_i$ that is selected, the left elbow $E_j$ partners $E_i$. 
The recurrence considers all possible combinations of left and right elbow pairs, and takes the minimum length canonical solution defined by these. 
Since there are up to $O(n^2)$ left and right elbows, there are up to $O(n^4)$ pairs to consider. 
Testing whether a left and a right elbow are partnering each other can be done in constant time using linear time preprocessing. 
Moreover, the recursion splits the polygon into at most $m$ subpolygons, thus the total runtime is $O(mn^4)$. 
\end{proof}

As there are $O(n^2)$ outer left elbows, and $O(n^2)$ outer right elbows, running the algorithm for all such pairs then takes $O(mn^8)$ time and computes a shortest canonical solution. This dominates the cost of preprocessing. 

\begin{lemma}\label{lem:approx-factor}
    Let $W^*$ be an optimal solution to the $m$-WRP in a staircase polygon $P$, and let $W$ be the canonical solution computed by the approximation algorithm. Then, $\lVert W \rVert \leq \lVert W^* \rVert + 4(\hMax + \vMax)$, where $\hMax$ is the length of a longest horizontal line segment in $P$, and $\vMax$ is the length of a longest vertical line segment in~$P$. 
\end{lemma}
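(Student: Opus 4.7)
The plan is to manufacture, from the optimal min-max solution $W^*=(w_1^*,\ldots,w_m^*)$, a canonical solution $W'=(w_1',\ldots,w_m')$ satisfying $\|w_i'\|\le\|w_i^*\|+4(\hMax+\vMax)$ for every~$i$. Since the algorithm runs the recurrence of Lemma~\ref{lem:e-route-optimality} over every choice of outer elbow pair and returns a shortest canonical solution overall, we have $\max_i\|w_i\|\le\max_i\|w_i'\|$, and the claimed bound then follows from taking maxima.

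To construct $w_i'$, let $[a_i,b_i]$ be the $x$-range of $w_i^*$. By Observation~\ref{obs:interval}, $w_i^*$ sees every point of $P$ whose $x$-coordinate lies in $[a_i,b_i]$. I would pick a left elbow $E_i$ whose elbow point lies within horizontal distance $\hMax$ of the line $x=a_i$ (this is possible since a suitable pair of convex ceiling and floor vertices must occur within that distance) and, symmetrically, a right elbow $E_i'$ near $x=b_i$. The route $w_i'$ is then $E_i$ and $E_i'$ connected by the shortest paths in $P$ between their ceiling endpoints and between their floor endpoints. Each elbow has length at most $\hMax+\vMax$, and by Observation~\ref{obs:interval} the resulting E-route sees every point that $w_i^*$ sees, so $W'$ is indeed a canonical solution covering $P$.

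The crux is to bound the two shortest paths of $w_i'$ by $\|w_i^*\|+2(\hMax+\vMax)$. I would split $w_i^*$ at a leftmost and a rightmost point into an upper subpath $U$ and a lower subpath $L$, so that $\|U\|+\|L\|=\|w_i^*\|$. Lifting $U$ vertically until its endpoints reach the ceiling and then horizontally sliding them to the ceiling endpoints of $E_i$ and $E_i'$ produces a curve in $P$ (using $xy$-monotonicity to guarantee that the lifted copy stays inside $P$) whose length exceeds $\|U\|$ by at most one vertical segment of length $\le\vMax$ plus one horizontal segment of length $\le\hMax$. Hence the shortest ceiling path of $w_i'$ has length at most $\|U\|+\vMax+\hMax$, and the symmetric argument bounds the floor path by $\|L\|+\vMax+\hMax$. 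Summing with the two elbow contributions yields $\|w_i'\|\le\|w_i^*\|+4(\hMax+\vMax)$, as required.

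The main obstacle will be making the lifting step rigorous when $U$ oscillates vertically. I expect to first replace $U$ by its upper envelope, whose length is at most $\|U\|$ (projecting onto the upper envelope can only shorten a curve) and which is $x$-monotone; a single vertical translation of at most $\vMax$ then places its endpoints on the ceiling, and matching those endpoints to the ceiling vertices of $E_i$ and $E_i'$ contributes the $\hMax$-per-side correction. The analogous argument works for the lower envelope and the floor path, completing the length accounting.
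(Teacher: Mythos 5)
Your skeleton matches the paper's: transform each route of $W^*$ into an E-route with additive overhead at most $4(\hMax+\vMax)$, then conclude because the algorithm returns a shortest canonical solution. But the step you yourself flag as the crux is where the proof actually lives, and your argument for it does not go through. (a) The lifting operation is ill-defined: the two endpoints of $U$ generally lie at different vertical distances from the ceiling, so no single vertical translation places both on it, and translating until the first point touches the ceiling leaves the other endpoint unaccounted for; the claim that the upper envelope of $U$ is no longer than $U$ is also asserted without proof. (b) Even granting a valid witness path, the accounting of $\hMax+\vMax$ overhead \emph{per path} is not justified: each path must be stitched to an elbow at \emph{both} ends, and each stitch generically costs one horizontal plus one vertical segment, i.e.\ up to $\hMax+\vMax$ per end; added to the two elbows at $\hMax+\vMax$ each, a naive count gives $6(\hMax+\vMax)$ or worse, not $4(\hMax+\vMax)$. (c) Your decomposition of $w_i^*$ at its leftmost and rightmost points is misaligned with the shape of an E-route: the right elbow's ceiling arm is horizontal and naturally sits above the route's \emph{topmost} point (not near $x=b_i$), so the ceiling path should shadow the arc of $w_i^*$ from its leftmost to its topmost point, and the floor path the arc from its lowest to its rightmost point. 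Note also that Observation~\ref{obs:interval} alone does not give coverage: a route can see points lying outside both its $x$- and $y$-range, so ``the E-route sees everything $w_i^*$ sees'' needs more care than you give it.

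The paper's construction avoids all of this by never moving the route. Split $w$ at its four extreme points and discard only the lower-left and upper-right corner arcs. Shoot a horizontal ray left from the leftmost point $p_c$ to hit a vertical ceiling edge at $q_c$, and a vertical ray down from the lowest point $p_f$ to hit a horizontal floor edge at $q_f$; take the left elbow $E$ of the convex vertices of those two edges, with elbow point $s$. The replacement path $p_c\to q_c\to s\to q_f\to p_f$ consists of exactly two horizontal and two vertical segments, because the connector $\overline{q_c r_c}$ and the elbow's vertical arm $\overline{r_c s}$ are collinear and merge into a single segment of length at most $\vMax$ (and symmetrically on the floor side); hence each corner costs at most $2(\hMax+\vMax)$ \emph{including} the elbow, for $4(\hMax+\vMax)$ in total. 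The E-route with these elbows uses shortest paths between the elbow endpoints and is therefore no longer than this witness curve. No lifting, no envelope, and the constant $4$ falls out of the merging. If you rewrite your proof around this splice-and-merge construction, the rest of your outline (maximizing over $i$ and invoking the exhaustive search over outer elbow pairs) stands.
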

\begin{proof}
    Let $w \in W^*$\!\!, and transform $w$ as follows: from the leftmost point $p_c$ of $w$, extend a maximal segment towards the left until it hits a vertical edge of the ceiling at a point $q_c$. Similarly, from its lowest point $p_f$ extend a maximal segment downwards until it hits a horizontal edge on the floor at a point $q_f$. Since $w$ is polygonal, there is a (possibly empty) subpath of $w$ that is both below $p_c$ and to the left of $p_f$. We denote this subpath $w_d\subseteq w$.
    Let $E$ be the left elbow defined by the convex vertices of the edges that contain $q_c$ and $q_f$ and let $r_c$ and $r_f$ denote the reflex vertices of these edges respectively. 
    
    We replace the subpath $w_d$ of $w$ by $\overline{p_c q_c} \cup \overline{q_c r_c} \cup E \cup \overline{r_f q_f} \cup \overline{q_f p_f}$, thereby extending it by the additive term $\leq2(\hMax + \vMax)$.
    Let $p'_c$ and $p'_f$ be the topmost and rightmost points of $w$, respectively. Let $w_u$ be the (possibly empty) subpath to the right of $p'_c$ and above $p'_f$. Let $q'_c$ be the point directly above $p'_c$ and  $q'_f$ be the point to the right of $p'_f$. Define $r'_c$ and $r'_f$ as the reflex vertices of the edges containing $q'_c$ and $q'_f$ and let $E'$ be the corresponding right elbow. We make the analogous replacement of $w_u$ for this case; see Figure~\ref{fig:transformation}, giving us a route~$w'$. 
    The new route $w'$ has length at most $\lVert w \rVert + 4(\hMax + \vMax)$, by the previous~argument.
\end{proof}
    \begin{figure}[h]
        \centering
        \includegraphics[width=0.35\textwidth]{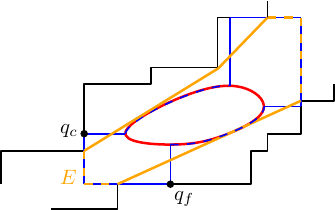}
        \caption{Transforming an optimal route $w$ (red) into an E-route (orange). }
        \label{fig:transformation}
    \end{figure} 
The algorithm, albeit having approximation factor one, since the approximation error is additive, will never compute a route of length zero. Hence, it cannot be used to obtain a polynomial time solution to minimum point guarding in staircase polygons. A polynomial time solution for this problem remains elusive.


{\small
\bibliographystyle{abbrv}

\bibliography{two-WRP-staircase}
}

\end{document}